\documentclass[graybox]{svmult}
\usepackage{newtxtext}       %
\usepackage{newtxmath}
\usepackage{scrextend}
\usepackage{hyperref,lineno}
\usepackage{graphicx}
\graphicspath{ {image/} }
\usepackage{subcaption}
\usepackage{amsmath,eqnarray}
\usepackage{multirow}
\usepackage[linesnumbered,ruled]{algorithm2e}
\newcommand{\mybegineq}{\begin{linenomath}\begin{equation}}
\newcommand{\myendeq}{\end{equation}\end{linenomath}}
\newtheorem{lem}{Lemma}
\newtheorem{thm}{Theorem}
\newtheorem{cor}{Corollary}
\usepackage[bottom]{footmisc}

\hyphenation{op-tical net-works semi-conduc-tor}
\bibliographystyle{plain}

\begin{document}
\title*{Sizing up the Batteries: Modelling of Energy-Harvesting Sensor Nodes in a Delay Tolerant Network}
\titlerunning{Sizing Up The Batteries}
\author{Jeremiah D. Deng}
\institute{Jeremiah~D.~Deng
\at Department of Information Science, University of Otago, PO Box 56, Dunedin 9054, 
\email{jeremiah.deng@otago.ac.nz}}
\date{}
\maketitle

\abstract{
For energy-harvesting sensor nodes, rechargeable batteries play a critical role in sensing and transmissions. By coupling two simple Markovian queue models in a delay-tolerant networking setting, we consider the problem of battery sizing for these sensor nodes to operate effectively: given the intended energy depletion and overflow probabilities, how to decide the minimal battery capacity that is required to ensure opportunistic data exchange despite the inherent intermittency of renewable energy generation. }


\section{Introduction}
Recently, energy-harvesting wireless sensor networks (EH-WSN)~\cite{Adu-Manu:18} have become a promising technology for sensing applications. The advantage of EH-WSN is obvious - batteries on the sensor nodes can be downsized due to their energy-harvesting capability, the network enjoys longer life time, eliminating the need of frequent of battery replacement, which is especially challenging for large-scale sensor deployment. However, apart from reservoirs, most renewable energy sources are intermittent in nature, which raises new challenges in designing EH-WSNs. For example, sensors may not get proper sunshine for recharging for hours, and wearable devices operated by kinetic energy will not benefit much from humans sitting for hours. This implies the necessity of using batteries to buffer the unsteady power supply from renewable energy sources. 

We consider a generic EH-WSN scenario where mobile nodes are equipped with capacity-limited batteries that are powered by harvested kinetic energy; data exchange between nodes requires 1) they are within transmission range to each other; and 2)  there is sufficient energy to conduct data transmission. This is in effect an EH-WSN operating as a delay-tolerant network (DTN)~\cite{Rodrigues:14}, where data transmission is opportunistic. In such a scenario, it is both important to ensure the battery size is large enough to avoid energy depletion (and hence potential failure for transmission) and energy overflow, both detrimental to the battery life. 

In a previous work~\cite{Zareei:18}, we have examined battery sizing in terms of depletion probability and overflow probability respectively, using a coupled data and energy queue system. In this work, we intend to investigate the mathematical properties of battery size as a function regarding the operational probability requirement, and develop an algorithm to calculate the minimum battery size needed to meet the given requirements.    

\section{Related Work}
As a performance modelling tool, queueing theory has been employed to
study EH-WSNs. Gelenbe~\cite{gelenbe2015} first looked the modelling of an EH-sensor node using the concept of discretized energy unit called ``energy packets''. The arrival of these energy packets is assumed to follow a Poisson process. A routing approach was further developed in~\cite{gelenbe2015interconnected}. 
A more general queueing model was introduced in~\cite{kadioglupacket}, relaxing the assumption that exactly one energy packet is required to transmit a data packet. A Markovian model with data buffering was further considered in~\cite{DeCuypere:18}. In a recent work~\cite{Zareei:18} we showed that kinetic energy harvested by fitness gears discretized as energy packets can be well modelled by Poisson processes. These previous works, however, considered only static EH sensors, without involving potential intermittent connections between EH-sensor nodes due to mobility. 

On the other hand, mobility has been widely investigated in ordinary wireless sensor networks and DTNs~\cite{krifa2008optimal,patel2015}. Despite some counter-arguments~\cite{Chaintreau:07}, several mobility model studies~\cite{aldous2002reversible,Groenevelt:2005,spyropoulos2006,krifa2008optimal} suggested that two mobile nodes' encounter follows a Poisson process in mobile ad hoc networks and DTNs.
There are few studies on energy harvesting networks that investigated the effects
of intermittent connections~\cite{harvestDTN1, lu2014}. 


\section{System Modelling}
Notations used in this article are listed as follows: 
\begin{labeling}{alligator}
\item [$ \lambda_E $] energy packet arrival rate 
\item[$\lambda_D$] data packet arrival rate
\item[$\lambda_C$] connection arrival rate  
\item[$\gamma_D$] ratio $\lambda_D/\lambda_C$
\item[$\gamma_E$] ratio $\lambda_E/\lambda_C$
\item[$\gamma$] ratio $\lambda_D/\lambda_E$
\item[$P_{D_0}$] proportion of time that there is no data in the system
\item[$P_{E_k}$] proportion of time that system have $ k $ energy packets $k=0,...,K$
\item[$\rho_D$] utilization factor of data buffer 
\item[$\rho_E$] utilization factor of energy buffer
\item[$\alpha$] acceptable probability of energy depletion
\item[$\beta$] acceptable probability of energy overflow  
\item[$K_{\alpha}$] battery capacity decided based on $\alpha$ 
\item[$ K_{\beta}$] battery capacity decided based on $ \beta $ 
\item[$\lceil x \rceil$] ceiling, the greatest integer more than or equal to $ x $  
\end{labeling}

\subsection{The queueing model} \label{sys_char}
We consider a network of mobile EH-sensors. Energy harvesting leads to Poisson arrivals
of energy packets (EP) with a rate of $\lambda_E$. Energy consumption occurs when there are 
data packets in buffer, provided that there are nodes in proximity, which is modulated by another Poisson arrival rate $\lambda_C$. Thus an Energy queue is formed at each sensor node, which can be modelled as an $M/M/1/K$, where $K$ is the battery capacity (in terms of number of energy packets). Data packets (DP) arrive at a Poisson rate $\lambda_D$, and leave a node if there is a connection available and there is at least an energy packet in system.
As memory in a sensor node is relatively cheap and less constrained, for simplicity we set no limit to the data buffer, hence allowing the data queue to be modelled by an $M/M/1$.
Clearly both queues are coupled by the connection availability. Hence the Markovian packet departures in both the Energy queue and the Data queue are modulated by the connection arrival rate $\lambda_C$. 

The system diagram for a sensor node is shown in Figure~\ref{fig:sys}.
\begin{figure}[tbhp]
\centerline{\includegraphics[width=0.4\textwidth]{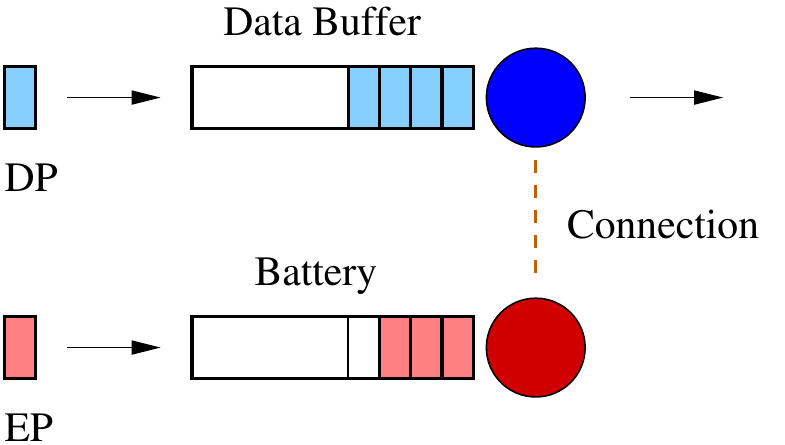}}
\caption{System diagram of a mobile sensor node. DPs arrive in the data buffer, while EPs arrive in the battery. Consumption of the energy as well as the transmission of data occur simultaneously when triggered by a connection established with another node.}
\label{fig:sys}
\end{figure}

Similar to~\cite{kadioglupacket} and~\cite{gelenbe2015interconnected}, we focus on modelling energy needed for data transmission and  
assume that compared with data transmission the sensing process consumes insignificant amount of energy from the battery. 
    

It is also worth mentioning that data transmission time is much faster than energy harvesting in a node. Given the size of sensory DP and the relatively large bandwidth in a network,  
packet transmission time is negligible~\cite{gelenbe2015}.    
Finally, to simplify our analysis we assume at each encounter only one DP is transmitted. 

These assumptions allow us to have a tractable system model with the energy and data state diagrams shown in Figure~\ref{fig:M-M-1-S}. 

\subsection{Queueing analysis} \label{math_analyse}

Queueing analysis has been carried out in the previous work~\cite{Zareei:18}. Here we only summarize some main results. 

The utilization of the data queue is given by:
\mybegineq
\label{rho_D}
\rho_D=\frac{\lambda_D}{\lambda_C(1-P_{E_0})}.
\myendeq
For sake of system stability, we have $\rho_D<1$. According to queueing theory, we have
\mybegineq
\label{PD0}
P_{D_0}=1-\rho_D.
\myendeq

The utilization of the energy queue is
\mybegineq
\label{rho_E}
\rho_E=\frac{\lambda_E}{\lambda_C(1-P_{D_0})}.
\myendeq

And the probability of energy depletion is
\mybegineq
\label{eq:P0}
P_{E_0}=\frac{1-{\rho_E}}{1-{\rho_E}^{K+1}},
\myendeq
while the probability of energy overflow is
\mybegineq
\label{eq:PK}
P_{E_K}={\rho_E}^K P_{E_0},
\myendeq



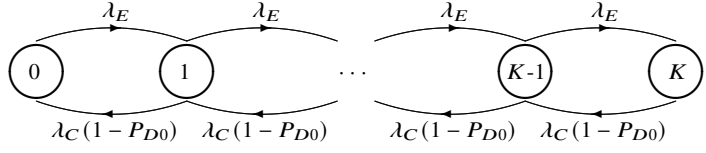
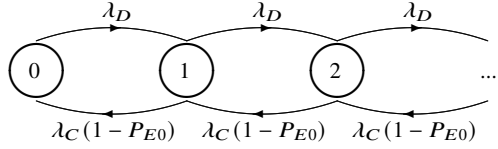
\begin{figure}[!t]
\begin{subfigure}[t]{\textwidth}
\centering
\setlength{\unitlength}{0.25cm}
\begin{picture}(40,10)

\thicklines
\put(1,4){\circle{3}}
\put(9,4){\circle{3}}
\put(27,4){\circle{3}}
\put(35,4){\circle{3}}
\thinlines
\footnotesize
\qbezier(1,5.6)(5,7)(9,5.6)\put(5,6.3){\vector(1,0){0.5}}
\put(0.6,3.6){{0}}
\put(4.5,6.85){{$\lambda_E$}}
\qbezier(1,2.4)(5,1.0)(9,2.4)\put(5,1.7){\vector(-1,0){0.5}}
\put(1.8,0.5){{$\lambda_C(1-P_{D0})$}}
\put(8.6,3.6){{1}}
\qbezier(9,5.6)(13,7)(17,5.6)\put(13,6.3){\vector(1,0){0.5}}
\put(12.5,6.85){{$\lambda_E$}}
\qbezier(9,2.4)(13,1)(17,2.4)\put(13,1.7){\vector(-1,0){0.5}}
\put(9.8,0.5){{$\lambda_C(1-P_{D0})$}}
%
\put(17,3.6){{$\cdots$}}
\qbezier(19,5.6)(23,7)(27,5.6)\put(23,6.3){\vector(1,0){0.5}}
\put(22.5,6.85){{$\lambda_E$}}
\qbezier(19,2.4)(23,1)(27,2.4)\put(23,1.7){\vector(-1,0){0.5}}
\put(19.8,0.5){{$\lambda_C(1-P_{D0})$}}
\put(26.0,3.6){{$K$-1}}
\qbezier(27,5.6)(31,7)(35,5.6)\put(31,6.3){\vector(1,0){0.5}}
\put(30.5,6.85){{$\lambda_E$}}
\qbezier(27,2.4)(31,1)(35,2.4)\put(31,1.7){\vector(-1,0){0.5}}
\put(27.8,0.5){{$\lambda_C(1-P_{D0})$}}
\put(34.6,3.6){{$K$}}

\end{picture}
\caption{The Energy queue modelled as M/M/1/K.}
\end{subfigure}

\begin{subfigure}[t]{\textwidth}
\centering
\setlength{\unitlength}{0.25cm}
\begin{picture}(30,10)
\thicklines
\put(1,4){\circle{3}}
\put(9,4){\circle{3}}
\put(17,4){\circle{3}}
\thinlines
\footnotesize
\qbezier(1,5.6)(5,7)(9,5.6)\put(5,6.3){\vector(1,0){0.5}}
\put(0.6,3.6){{0}}
\put(4.5,6.85){{$\lambda_D$}}
\qbezier(1,2.4)(5,1)(9,2.4)\put(5,1.7){\vector(-1,0){0.5}}
\put(1.8,0.5){{$\lambda_C(1-P_{E0})$}}
\put(8.6,3.6){{1}}
\qbezier(9,5.6)(13,7)(17,5.6)\put(13,6.3){\vector(1,0){0.5}}
\put(12.5,6.85){{$\lambda_D$}}
\qbezier(9,2.4)(13,1)(17,2.4)\put(13,1.7){\vector(-1,0){0.5}}
\put(9.8,0.5){{$\lambda_C(1-P_{E0})$}}
\put(16.6,3.6){{2}}
\qbezier(17,5.6)(21,7)(25,5.6)\put(21,6.3){\vector(1,0){0.5}}
\put(20.5,6.85){{$\lambda_D$}}
\qbezier(17,2.4)(21,1)(25,2.4)\put(21,1.7){\vector(-1,0){0.5}}
\put(17.8,0.5){{$\lambda_C(1-P_{E0})$}}
%
\put(24.6,3.6){{...}}
%

\end{picture}
\caption{The Data queue modelled as M/M/1.}
\end{subfigure}
\caption{Queueing models for the Energy and Data queues respectively. }
\label{fig:M-M-1-S}
\end{figure}
      
By substituting Eq.(\ref{rho_D}) and Eq.(\ref{rho_E}) in Eq.(\ref{PD0}), we have
\mybegineq
\label{P-d-0}
P_{D_0}=1-\frac{\lambda_D}{\lambda_C(1-P_{E_0})}.
\myendeq
Similarly, from (\ref{eq:P0}), we have
%
\mybegineq
\label{PE0}
P_{E_0}=\frac{\left(\lambda_C(1-P_{D_0})\right)^K\left(\lambda_C(1-P_{D_0})-\lambda_E\right)}{\left(\lambda_C(1-P_{D_0})\right)^{K+1}-{\lambda_E}^{K+1}},
\myendeq
which, by substituting $P_{D_0}$ using Eq.(\ref{P-d-0}), becomes
\mybegineq
\begin{array}{ll}
P_{E_0} & =\displaystyle\frac{\lambda_D^K(\lambda_D-\lambda_E(1-P_{E_0}))}{\lambda_D^{K+1}-(\lambda_E(1-P_{E_0}))^{K+1}} \\
& =\displaystyle\frac{\gamma^K(\gamma+P_{E_0}-1)}{\gamma^{K+1}-(1-P_{E_0})^{K+1}}. 
\end{array}
\label{P_0}
\myendeq
where $\gamma=\lambda_D/\lambda_E$. 
Here we introduce a new variable $\zeta=\frac{1-P_{E_0}}{\gamma}$, to further simplify the mathematical formulations. From the equation above, we have:
\mybegineq
1-\gamma\zeta=\displaystyle\frac{1-\zeta}{1-\zeta^{K+1}}.
\label{eq:P_0a}
\myendeq

Using Eq.(\ref{eq:PK}), we work on the probability of energy overflow 
\mybegineq
P_{E_K}=\displaystyle\frac{(1-P_{E_0})^K(\gamma+P_{E_0}-1)}{\gamma^{K+1}-(1-P_{E_0})^{K+1}},
\myendeq
which can be further simplified to 
\mybegineq
\label{P_K}
P_{E_K}=
\begin{cases}
\displaystyle\frac{1 - \frac{1}{\zeta}}{1 - \frac{1}{\zeta^{K+1}}}
& \zeta \ne 1, \\
\frac{1}{K+1} & \zeta=1.
\end{cases}
\myendeq
where the case of $\zeta=1$ is obtained by using L'H\^{o}pital's rule. 

\begin{figure}[!t]
	\centering
	\includegraphics[width=\textwidth]{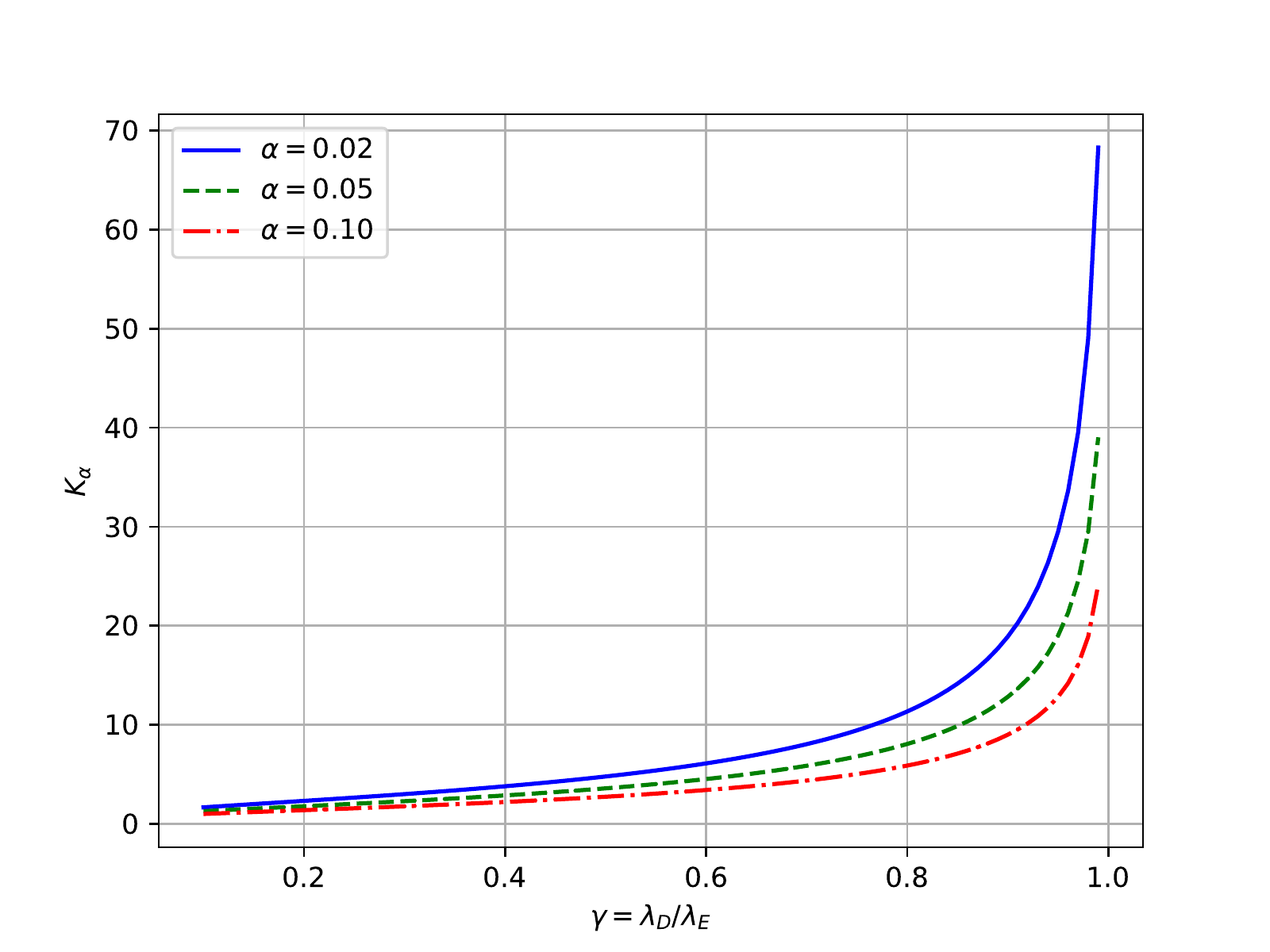}
	\caption{Different values of $K_\alpha$ with respect to $\gamma$ for $\alpha = 0.05 , 0.02$, and $0.1$.} 
	\label{fig:K_a}
\end{figure}


\section{Battery capacity sizing}
Having obtained the formulae for $P_{E_0}$ and $P_{E_K}$, we are now set to find the close-form solution for the battery size as 
required by battery depletion and overflow probabilities. To simplify notations, let $P_{E_0}=\alpha$, $P_{E_K}=\beta$. 
\subsection{Battery size versus depletion probability}
First we look at the battery size decided by $\alpha$, denoted by $K_\alpha$. 
$K_\alpha$ is in fact a function of $\alpha$ and $\gamma$. 
From Eq.(\ref{eq:P_0a}), we have 
\mybegineq
1-\gamma\zeta=\frac{1-\zeta}{1-\zeta^{K_\alpha+1}},
\myendeq
which leads to 
\mybegineq
\zeta^{K_\alpha}= \frac{1-\gamma}{1-\gamma\zeta}.
\myendeq
Note that $1-\gamma\zeta=\alpha>0$. 
By taking logarithm on both sides, and substituting $\zeta$ with $\frac{1-\alpha}{\gamma}$, eventually we have 
\mybegineq
\label{eq:K_a}
K_\alpha=
\displaystyle\frac{\ln\frac{1-\gamma}{\alpha}}{\ln\frac{1-\alpha}{\gamma}}.
\myendeq
Note this result implies that under the required condition $\gamma<1$, we have a positive solution of $K_\alpha$. 
One can see that if $1-\gamma>\alpha$, then $1-\alpha>\gamma$; otherwise if $1-\gamma<\alpha$, then $1-\alpha<\gamma$.
Therefore $K_\alpha>0$. To further explore the properties of $K_\alpha>0$, we first introduce a lemma. 
\begin{lem}\label{lem:K_a}
$K_\alpha$ is monotonously increasing in terms of $\gamma$.
\end{lem}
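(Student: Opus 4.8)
The plan is to treat the depletion probability $\alpha\in(0,1)$ as a fixed parameter and regard $K_\alpha$ from Eq.(\ref{eq:K_a}) as a differentiable function of $\gamma$ on the stability interval $(0,1)$, then to show that $dK_\alpha/d\gamma>0$ throughout. Writing $f(\gamma)=\ln\frac{1-\gamma}{\alpha}$ and $g(\gamma)=\ln\frac{1-\alpha}{\gamma}$ so that $K_\alpha=f/g$, both logarithmic arguments are positive and smooth on $(0,1)$, so the quotient rule gives $dK_\alpha/d\gamma=(f'g-fg')/g^2$ with $f'=-1/(1-\gamma)$ and $g'=-1/\gamma$. Since $g^2>0$, the sign of the derivative is that of $f'g-fg'$, and multiplying the latter by the positive factor $\gamma(1-\gamma)$ reduces the whole claim to proving the single inequality
\mybegineq
h(\gamma):=(1-\gamma)\ln\frac{1-\gamma}{\alpha}+\gamma\ln\frac{\gamma}{1-\alpha}>0.
\myendeq

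The second step is to establish $h\ge 0$ by an elementary argument. I would observe that $h(\gamma)$ is exactly the relative entropy between the Bernoulli distributions $(1-\gamma,\gamma)$ and $(\alpha,1-\alpha)$, so Gibbs' inequality gives $h\ge 0$; equivalently, applying the elementary bound $\ln x\le x-1$ to $\ln\frac{\alpha}{1-\gamma}$ and $\ln\frac{1-\alpha}{\gamma}$ after factoring out a minus sign yields the same conclusion, the two resulting linear terms cancelling to $0$. Either route also shows that the inequality is strict except where both argument ratios equal $1$, that is, only at $\gamma=1-\alpha$, which is precisely the point at which $h$ vanishes.

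The hard part will be the point $\gamma=1-\alpha$. There $f$ and $g$ vanish simultaneously, so $K_\alpha$ itself is defined only through the removable $0/0$ limit (L'H\^{o}pital gives $K_\alpha=\frac{1-\alpha}{\alpha}$ there), and at the same time both the numerator $f'g-fg'$ and the denominator $g^2$ of $dK_\alpha/d\gamma$ vanish, so the sign of $h$ alone does not settle monotonicity across this point. I would circumvent this by first concluding, from $h>0$ on $(0,1)\setminus\{1-\alpha\}$, that $K_\alpha$ is strictly increasing on each of $(0,1-\alpha)$ and $(1-\alpha,1)$; then, because $K_\alpha$ extends continuously through $\gamma=1-\alpha$, a continuous function that is strictly increasing on both sides of an interior point and continuous at it is strictly increasing across it. Gluing the two pieces together yields the monotonic increase of $K_\alpha$ in $\gamma$ on the whole interval.
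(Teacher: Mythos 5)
Your proposal is correct, and while it shares the paper's skeleton, the key step is argued by a genuinely different route. Like the paper, you apply the quotient rule and reduce the claim to the sign of the numerator $\frac{1}{\gamma}\ln\frac{1-\gamma}{\alpha}-\frac{1}{1-\gamma}\ln\frac{1-\alpha}{\gamma}$; the two proofs diverge in how that inequality is established. The paper exponentiates and rearranges it into $(1-\gamma)^{1/\gamma}\gamma^{1/(1-\gamma)}\ge\alpha^{1/\gamma}(1-\alpha)^{1/(1-\gamma)}$, which it reads as $f(\gamma)\ge f(1-\alpha)$ for the auxiliary function $f(z)=(1-z)^{1/\gamma}z^{1/(1-\gamma)}$, and then runs a second calculus argument showing $f$ is maximized at $z=\gamma$. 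You instead multiply the numerator by $\gamma(1-\gamma)$ and recognize the result as the relative entropy between the Bernoulli distributions $(1-\gamma,\gamma)$ and $(\alpha,1-\alpha)$, so non-negativity follows from Gibbs' inequality, or purely elementarily from $\ln x\le x-1$; this avoids the second optimization entirely and yields the equality case $\gamma=1-\alpha$ for free. Your treatment is also more careful than the paper's at exactly that point: at $\gamma=1-\alpha$ both the defining formula for $K_\alpha$ and the derivative expression are of the form $0/0$, so the paper's remark that ``equality stands when $\gamma=1-\alpha$'' does not by itself establish monotonicity across that point, whereas your gluing argument (strict increase on $(0,1-\alpha)$ and $(1-\alpha,1)$ plus continuity of the extension at $1-\alpha$) closes this gap cleanly. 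In short: same reduction, but your proof of the core inequality is more elementary and your handling of the removable singularity is more rigorous than the paper's.
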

The proof of Lemma~\ref{lem:K_a} is given in Appendix~\ref{proof:Ka}. 

The interpretation is rather straightforward -- the larger the $\gamma$ ratio is, the more frequent DPs arrive compared with EPs, hence causing higher chance of battery depletion. To maintain the depletion probability under increased $\gamma$, a larger battery capacity is therefore needed. 

From Lemma~\ref{lem:K_a}, we arrive at Theorem~\ref{theo:K_a}.
\begin{thm}\label{theo:K_a}
The battery size as required by the depletion probability is a monotonously decreasing function of the latter.
\end{thm}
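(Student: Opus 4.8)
The plan is to avoid differentiating Eq.~(\ref{eq:K_a}) directly in $\alpha$ and instead exploit a symmetry in the closed form together with Lemma~\ref{lem:K_a}, exactly as the phrasing ``From Lemma~\ref{lem:K_a}, we arrive at Theorem~\ref{theo:K_a}'' suggests. First I would regard the right-hand side of Eq.~(\ref{eq:K_a}) as a function of two independent arguments,
\[
f(a,g)=\frac{\ln\frac{1-g}{a}}{\ln\frac{1-a}{g}},
\]
so that $K_\alpha=f(\alpha,\gamma)$. Here $\alpha$ (the design requirement) and $\gamma$ (the system characteristic) are genuinely independent inputs to the sizing formula, so treating them as two free variables is legitimate.

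The key observation is that interchanging the two arguments inverts the value: swapping $a\leftrightarrow g$ simply exchanges numerator and denominator, giving $f(g,a)=1/f(a,g)$, hence
\[
f(\alpha,\gamma)\,f(\gamma,\alpha)=1.
\]
Lemma~\ref{lem:K_a} states that $f$ is increasing in its second argument. Applying this with the roles reversed, that is, holding $\gamma$ fixed in the first slot and letting $\alpha$ vary in the second slot, shows that $\alpha\mapsto f(\gamma,\alpha)$ is increasing. Since $f(\gamma,\alpha)=1/K_\alpha$, the reciprocal $1/K_\alpha$ increases with $\alpha$; as $K_\alpha>0$ throughout the admissible range (established just before the lemma), it follows that $K_\alpha$ itself is decreasing in $\alpha$, which is the claim.

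The step to check most carefully is that Lemma~\ref{lem:K_a} genuinely applies after the swap: I must confirm that its proof establishes positivity of the partial derivative in the second argument on the whole admissible domain $(a,g)\in(0,1)\times(0,1)$, rather than only along the particular curve where the variable is called $\gamma$. The swapped pair $(\gamma,\alpha)$ satisfies the same constraints ($\gamma<1$ by stability and $\alpha\in(0,1)$ as a probability), so the monotonicity in the second slot transfers verbatim. I would also record the positivity $K_\alpha>0$ explicitly, since passing from ``$1/K_\alpha$ increasing'' to ``$K_\alpha$ decreasing'' requires $K_\alpha$ to keep a constant, positive sign. A purely computational alternative, forming $\partial K_\alpha/\partial\alpha$ by the quotient rule and reducing its sign to the inequality $\alpha\ln\frac{1-\gamma}{\alpha}<(1-\alpha)\ln\frac{1-\alpha}{\gamma}$, would also close the argument, but the symmetry route is shorter and reuses the lemma directly, so I expect the only real obstacle to be stating the domain bookkeeping cleanly rather than any hard estimate.
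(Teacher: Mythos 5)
Your proof is correct, and it shares the paper's overall strategy --- transfer the $\gamma$-monotonicity of Lemma~\ref{lem:K_a} to $\alpha$-monotonicity via a symmetry of the closed form in Eq.~(\ref{eq:K_a}) --- but the symmetry you exploit is a genuinely different one. The paper uses the value-\emph{preserving} identity $f(\gamma,\alpha)=f(1-\alpha,1-\gamma)$: complementing both variables and swapping them leaves $K_\alpha$ unchanged, so an increase in $\alpha$ acts as a decrease of the argument sitting in the ``$\gamma$'' slot, and Lemma~\ref{lem:K_a} applied at the complementary point $(1-\alpha,1-\gamma)$ finishes directly. You instead use the value-\emph{inverting} identity $f(g,a)=1/f(a,g)$: a plain swap exchanges numerator and denominator, so the lemma applied at the swapped point shows $1/K_\alpha$ is increasing in $\alpha$, and the positivity $K_\alpha>0$ (established just before the lemma) converts this into the claim. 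The trade-offs are minor: your route needs the extra positivity step, which the paper's does not, but your write-up is more careful about the one point both arguments silently rely on --- that Lemma~\ref{lem:K_a} holds as a statement about the two-variable function on the whole square $(0,1)\times(0,1)$, not merely for the particular parameters named $\alpha$ and $\gamma$. The appendix proof does deliver exactly this: the inequality $(1-\gamma)^{1/\gamma}\gamma^{1/(1-\gamma)}\ge \alpha^{1/\gamma}(1-\alpha)^{1/(1-\gamma)}$ is established for all $\alpha,\gamma\in(0,1)$ by maximizing $(1-z)^{1/\gamma}z^{1/(1-\gamma)}$ at $z=\gamma$, so both your application at $(\gamma,\alpha)$ and the paper's at $(1-\alpha,1-\gamma)$ are legitimate. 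Like the paper's proof, yours is confined to the region $\gamma\neq 1-\alpha$ where Eq.~(\ref{eq:K_a}) is finite; the removable singularity along that curve is handled separately in the text by L'H\^{o}pital's rule, so nothing is lost.
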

\begin{proof}
Obviously, Eq.(\ref{eq:K_a}) contains some kind of symmetry between $\gamma$ and $\alpha$. 
Let $\alpha'=1-\alpha$, $\gamma'=1-\gamma$, then the function for calculating $K_\alpha$ satisfies
\begin{equation}
\begin{array}{rl}
f(\gamma, \alpha)
&= 
\displaystyle\frac{\ln(1-\gamma)-\ln\alpha}{\ln(1-\alpha)-\ln\gamma}
=\displaystyle\frac{\ln\gamma' - \ln(1-\alpha')}{\ln\alpha'-\ln(1-\gamma')} \\ 
&= \displaystyle\frac{\ln(1-\alpha') - \ln\gamma'}{\ln(1-\gamma') - \ln\alpha'} =f(\alpha',\gamma')
=f(1-\alpha, 1-\gamma).
\end{array}
\end{equation}
This suggests that for function $f(.)$, an increased $\alpha$  corresponds in effect to a decreased ``$\gamma$''; and an increased $\gamma$ corresponds to a decreased ``$\alpha$''. As we have already shown $K_\alpha$ is monotonically increasing with $\gamma$, we can now conclude
$K_\alpha$ is a monotonically decreasing function of $\alpha$. 
\end{proof}

Theorem~\ref{theo:K_a} is again reasonable, since a smaller energy depletion probability would require a larger battery size. 
Figure \ref{fig:K_a} shows some example $K_\alpha$ curves with different $\gamma$ and $\alpha$ values. 

\begin{figure}[!t]
 	\centering
 	\includegraphics[width=1.\textwidth]{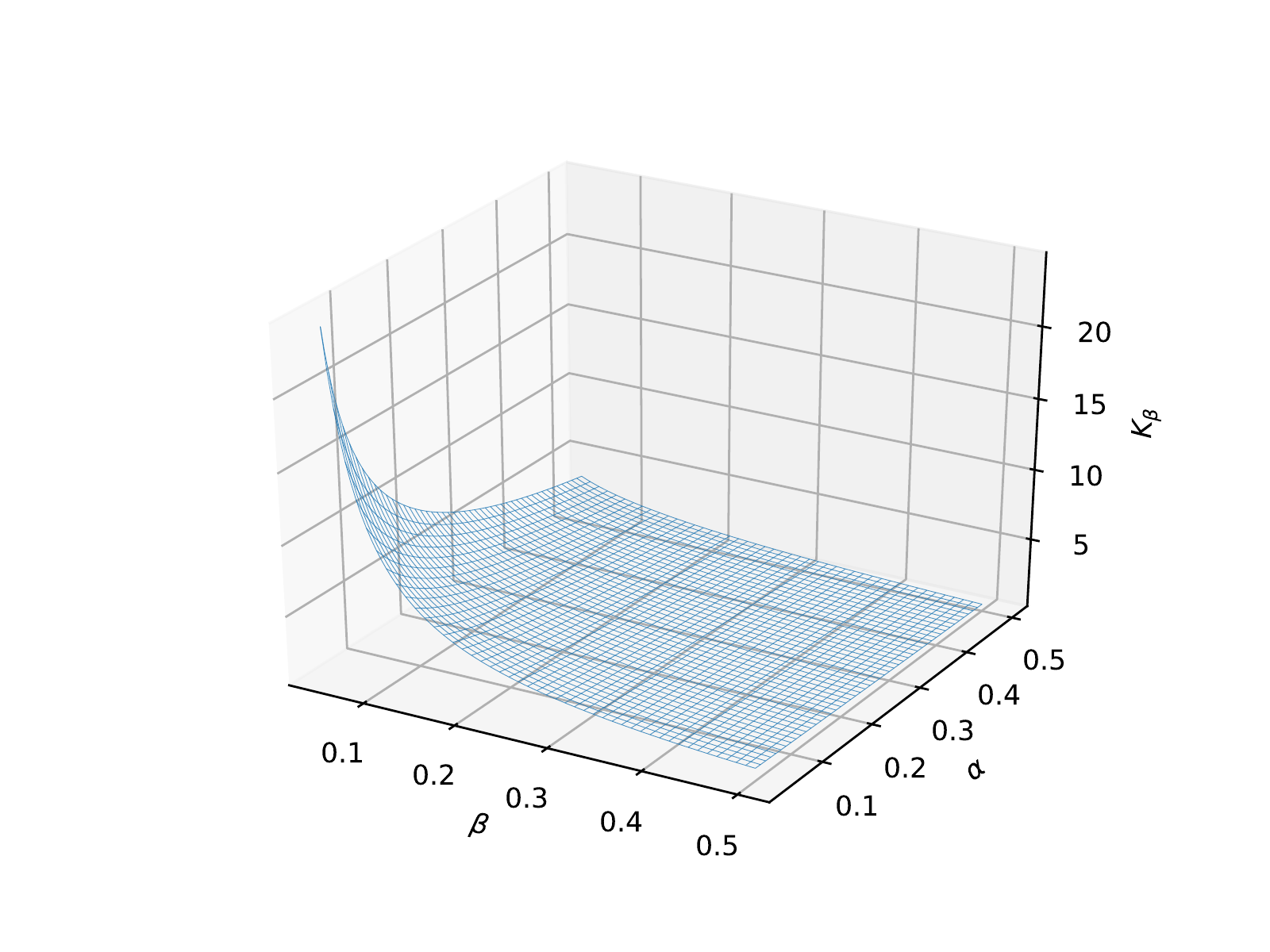}
 	\caption{$K_\beta$ values with respect to $\alpha $ and $ \beta$ when $\gamma=0.95$.}  
 	\label{fig:K_a_b}
 \end{figure}

So far we have assumed $\gamma\neq 1-\alpha$. 
The special case of $\gamma=1-\alpha$, however, 
is allowed, as again using L'H\^{o}pital's rule we have 
\mybegineq
K_\alpha=
\frac{1}{\alpha}-1,
\myendeq
which is also a positive, monotonically decreasing function of $\alpha$. 

\subsection{Battery size versus overflow probability}
Let $K_\beta$ be the battery size decided by a given $P_{E_K}$ value ($\beta$). 
We consider the normal case given in Eq.(\ref{P_K}). Let $z=1/\zeta$. We have
\mybegineq
\beta=\displaystyle\frac{1-z}{1-z^{K_\beta+1}},
\myendeq
which leads to 
\mybegineq
z^{K+1}=\frac{z+\beta-1}{\beta}.
\myendeq
Here we can see unless $z+\beta>1$, there is no real solution for $K_\beta$. In fact, it is easy to see that when $z<1$,
\mybegineq
1-z<\frac{1-z}{1-z^{K_\beta+1}}<1,
\myendeq
i.e., $\beta$ has a lower bound $1-z$. 

When the condition $z+\beta>1$ is satisfied (and naturally so when $z>1$),  we have 
\mybegineq
K_\beta=\displaystyle\frac{\ln \frac{z+\beta-1}{\beta}}{\ln z} -1=
\frac{\ln[\gamma-(1-\alpha)(1-\beta)]-\ln\beta\gamma}{\ln\gamma-\ln(1-\alpha)}
\label{eq:K_b}
\myendeq

For the battery size decided by the overflow probability, we have the following theorem:
\begin{thm}\label{theo:K_b}
$K_\beta$ is monotonically decreasing when $\beta$ increases.
\end{thm}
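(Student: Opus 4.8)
The plan is to treat $z=1/\zeta=\gamma/(1-\alpha)$ as a fixed constant (since $\alpha$ and $\gamma$ are held fixed here) and to differentiate the closed-form expression for $K_\beta$ with respect to $\beta$ directly. Starting from Eq.~(\ref{eq:K_b}), I would rewrite
\begin{equation}
K_\beta+1=\frac{\ln(z+\beta-1)-\ln\beta}{\ln z},
\end{equation}
so that $K_\beta$ differs from $g(\beta)/\ln z$ only by a constant, where $g(\beta)=\ln(z+\beta-1)-\ln\beta$. Differentiating then gives
\begin{equation}
\frac{dK_\beta}{d\beta}=\frac{g'(\beta)}{\ln z},\qquad
g'(\beta)=\frac{1}{z+\beta-1}-\frac{1}{\beta}=\frac{1-z}{\beta\,(z+\beta-1)}.
\end{equation}

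The core of the argument is a sign analysis, which I would split according to whether $z>1$ or $z<1$. Throughout, the admissibility condition $z+\beta>1$ established just before Eq.~(\ref{eq:K_b}) guarantees $z+\beta-1>0$, and since $\beta>0$ the denominator $\beta(z+\beta-1)$ is positive in both regimes. When $z>1$ we have $1-z<0$ and $\ln z>0$, so $g'(\beta)<0$ and hence $dK_\beta/d\beta<0$. When $z<1$ we have $1-z>0$ but $\ln z<0$, so $g'(\beta)>0$ while dividing by $\ln z<0$ again yields $dK_\beta/d\beta<0$. I would then dispose of the degenerate case $z=1$ (equivalently $\zeta=1$, i.e. $\gamma=1-\alpha$), where the expression is indeterminate because $\ln z=0$: the relevant branch of Eq.~(\ref{P_K}) gives $\beta=1/(K_\beta+1)$, so $K_\beta=1/\beta-1$, which is manifestly decreasing in $\beta$ and matches the two generic cases in the limit.

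As for where the difficulty lies, the computation itself is routine, so the only real subtlety is bookkeeping the signs correctly. The step worth stating explicitly is that the sign of $1-z$ and the sign of $\ln z$ always flip together, so the ratio $g'(\beta)/\ln z$ remains negative regardless of whether $z$ exceeds or falls below unity; it is this cancellation, together with the domain restriction $z+\beta-1>0$, that makes the single conclusion hold uniformly across the parameter range and thereby establishes the claimed monotonic decrease.
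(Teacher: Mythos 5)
Your proof is correct and follows essentially the same route as the paper's: the same rewriting of $K_\beta$ in terms of $z=\gamma/(1-\alpha)$, the same case split on $z>1$ versus $z<1$ with the observation that the signs of the numerator and of $\ln z$ flip together, and the same treatment of the degenerate case $z=1$ via $K_\beta=1/\beta-1$. The only difference is cosmetic: you formalize the monotonicity with an explicit derivative $g'(\beta)=\frac{1-z}{\beta(z+\beta-1)}$, whereas the paper argues the monotone dependence of $\ln\bigl(\frac{z-1}{\beta}+1\bigr)$ on $\beta$ directly; your version also makes the role of the domain condition $z+\beta-1>0$ slightly more explicit.
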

The proof of Theorem~\ref{theo:K_b} is given in Appendix~\ref{proof:Kb}.

Figure~\ref{fig:K_a_b} shows the trend $K_\beta$ values display across a range of $\alpha$ and $\beta$ values when $\gamma=0.95$. 
 
It can be proven that for the special case of $z=1$, i.e., $\gamma = 1-\alpha$, we have
\mybegineq
K_\beta = \lim_{z\rightarrow 1}\frac{\ln \frac{z+\beta-1}{\beta}}{\ln z} -1 = \frac{1}{\beta}-1.
\myendeq
Clearly, Theorem~\ref{theo:K_b} still holds. 

\subsection{Battery sizing algorithm}
Given different requirements in terms of $\alpha$ and $\beta$ values, and the system setup in terms of $\gamma$, we can derive
the relevant $K_\alpha$ and $K_\beta$ to size up the battery. As seen from Eq.(\ref{eq:K_b}), the condition 
\[ (1-\alpha)(1-\beta) < \gamma \] 
has to stand for calculating $K_\beta$. 
One question remains -- between $K_\alpha$ and $K_\beta$ which one actually decides the size of the battery? 
It is easy to see that when $\beta + \gamma=1$, $K_\alpha=K_\beta$. Since we have shown that $K_\alpha$ is an monotonically increasing function
of $\gamma$, and $K_\beta$ a monotonically decreasing function of $\beta$, 
we give the following corollary without needing a formal proof: 
\begin{cor}
If $\beta+\gamma<1$, $K_\alpha<K_\beta$; if $\beta+\gamma>1$, $K_\alpha>K_\beta$.
\end{cor}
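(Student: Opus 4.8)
The plan is to convert the two monotonicity facts already established into a sign comparison by pinning down the unique value of $\beta$ at which the two capacities coincide. Throughout I would fix $\alpha$ and $\gamma$ and regard $K_\beta$ as a function of $\beta$ alone; the key observation is that $K_\alpha$ in Eq.(\ref{eq:K_a}) does not involve $\beta$, so within this one-parameter family it is simply a constant. The whole argument then rests on two ingredients: (i) verifying the crossing identity $K_\alpha=K_\beta$ exactly when $\beta+\gamma=1$, and (ii) invoking the strict monotonicity of $K_\beta$ in $\beta$ from Theorem~\ref{theo:K_b}.

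First I would establish the crossing point by direct substitution into Eq.(\ref{eq:K_b}). Setting $\beta=1-\gamma$ gives $1-\beta=\gamma$, so the logarithm argument in the numerator collapses as
\begin{equation}
\gamma-(1-\alpha)(1-\beta)=\gamma-(1-\alpha)\gamma=\alpha\gamma,
\end{equation}
whence the numerator of $K_\beta$ becomes $\ln(\alpha\gamma)-\ln(\beta\gamma)=\ln\frac{\alpha}{1-\gamma}$ and the denominator is $\ln\gamma-\ln(1-\alpha)=\ln\frac{\gamma}{1-\alpha}$. Since these are the negatives of $\ln\frac{1-\gamma}{\alpha}$ and $\ln\frac{1-\alpha}{\gamma}$ respectively, the two sign flips cancel and $K_\beta$ reduces precisely to the expression for $K_\alpha$ in Eq.(\ref{eq:K_a}). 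This confirms $\beta=1-\gamma$ as the anchor value.

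With the anchor in place the conclusion follows immediately from monotonicity. By Theorem~\ref{theo:K_b}, $K_\beta(\beta)$ is strictly decreasing, so for $\beta<1-\gamma$ (equivalently $\beta+\gamma<1$) we get $K_\beta(\beta)>K_\beta(1-\gamma)=K_\alpha$, while for $\beta>1-\gamma$ (equivalently $\beta+\gamma>1$) we get $K_\beta(\beta)<K_\alpha$; these are exactly the two claimed inequalities. I expect the only real obstacle to be bookkeeping rather than depth: one must check that the comparison stays inside the admissible region, namely that $\beta=1-\gamma$ satisfies the feasibility constraint $(1-\alpha)(1-\beta)<\gamma$ required for Eq.(\ref{eq:K_b}). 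This holds because there the left-hand side equals $(1-\alpha)\gamma<\gamma$, so the crossing point is feasible and, by strict monotonicity, the only crossing; hence the sign of $K_\alpha-K_\beta$ is governed solely by which side of $\beta=1-\gamma$ one sits on. Finally, the degenerate case $\gamma=1-\alpha$, where $K_\alpha=\frac{1}{\alpha}-1$ and $K_\beta=\frac{1}{\beta}-1$, can be verified separately and is consistent with the same dichotomy, since $K_\alpha<K_\beta$ there reduces to $\beta<\alpha=1-\gamma$.
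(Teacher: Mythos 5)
Your proof is correct and takes essentially the same route the paper itself sketches: verify the crossing identity $K_\alpha=K_\beta$ at $\beta+\gamma=1$ by substitution into Eq.(\ref{eq:K_b}), then invoke the monotonicity of $K_\beta$ in $\beta$ from Theorem~\ref{theo:K_b} to settle the two sign cases. The paper states the corollary ``without needing a formal proof''; your write-up merely supplies the details it leaves implicit (the substitution check, the feasibility of the crossing point, and the degenerate case $\gamma=1-\alpha$), and correctly observes that only the $\beta$-monotonicity is needed since $K_\alpha$ is constant in $\beta$.
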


\begin{figure}
\centering
\begin{subfigure}[b]{0.9\textwidth}
 \includegraphics[width=\textwidth]{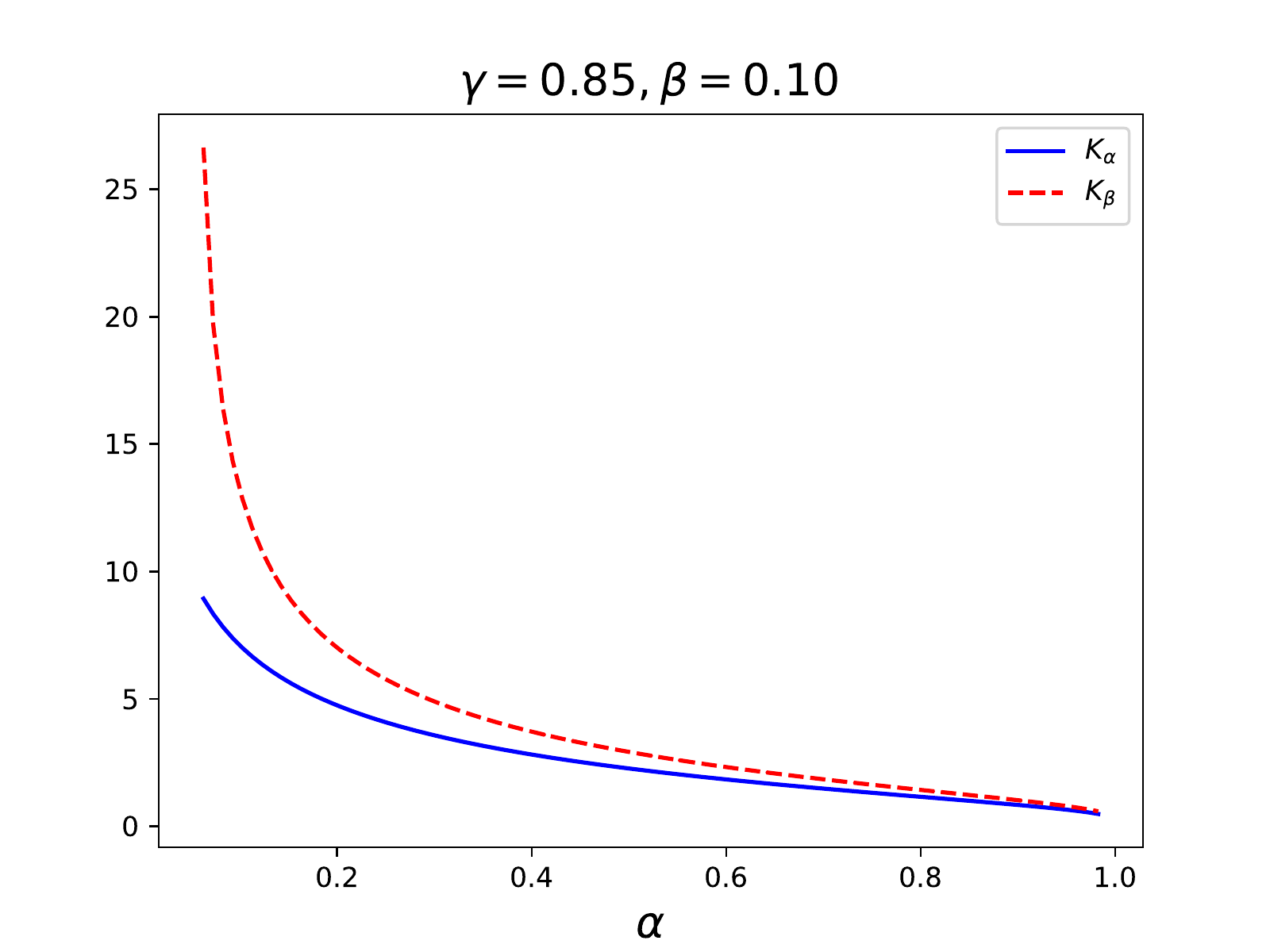}
 \caption{$\beta+\gamma<1$}
\end{subfigure}
\begin{subfigure}[b]{0.9\textwidth}
 \includegraphics[width=\textwidth]{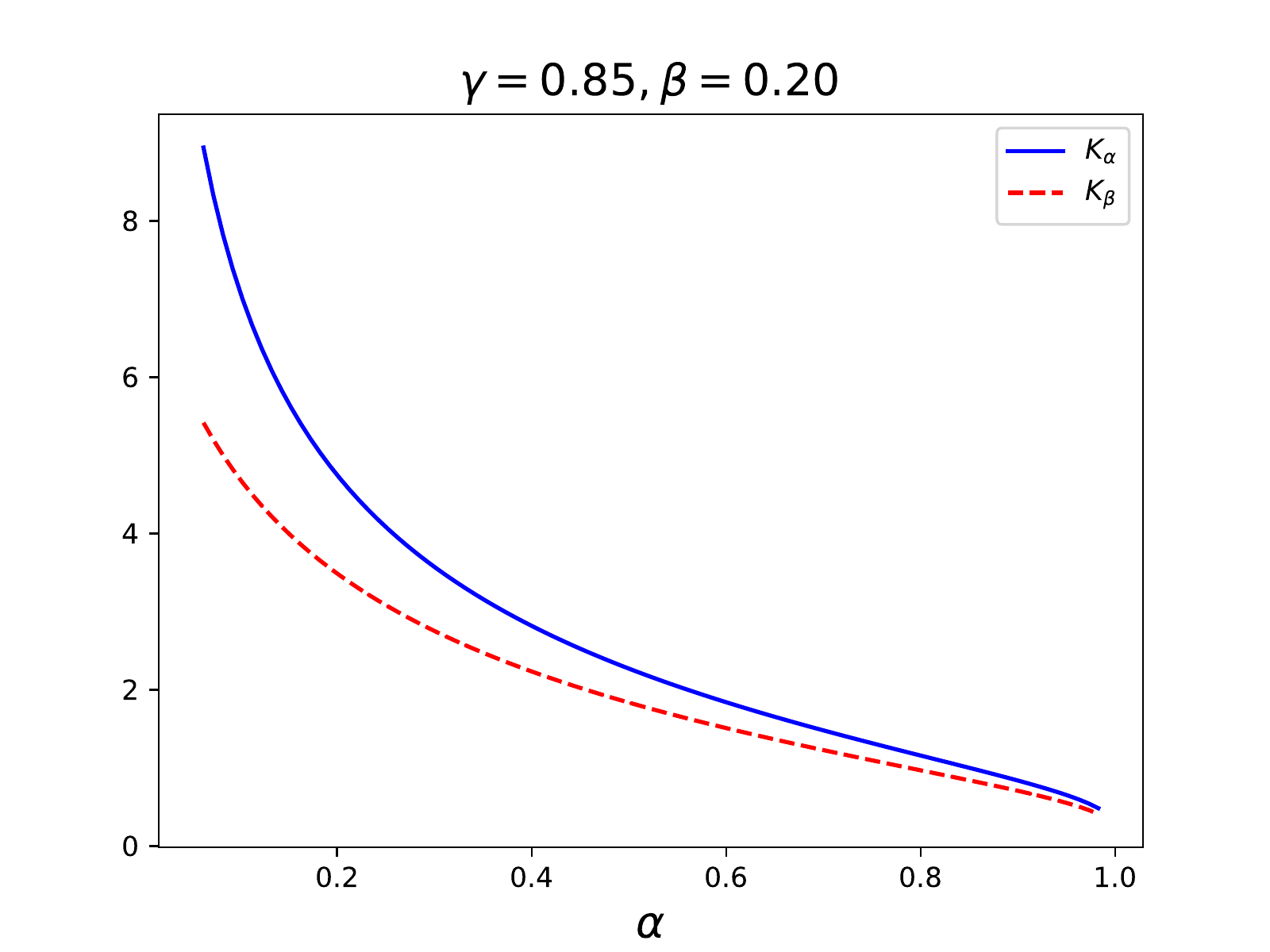}
  \caption{$\beta+\gamma>1$}
\end{subfigure}
\caption{Comparison of $K_\alpha$ and $K_\beta$ in two different cases.}
\label{fig:K_compare}
\end{figure}

Hence we have the following algorithm for battery sizing.

\begin{algorithm}[H]
\caption{Battery sizing under constraints of $P_{E_0}$ and $P_{E_k}$}
\label{algo:batt-sizing}
\KwData{$0<\alpha,\beta<1$, $(1-\alpha)(1-\beta)<\gamma<1$}
\KwResult{minimum $K$ satisfying $P_{E_0}<\alpha$, $P_{E_K}<\beta$}
 \eIf{$\beta+\gamma<1$}
 {
   $K\leftarrow K_\beta$ using Eq.(\ref{eq:K_b})\;
 }{
   $K\leftarrow K_\alpha$ using Eq.(\ref{eq:K_a})\;
 }
\end{algorithm}
 
\medskip
Figure \ref{fig:K_compare} gives two settings when $K_\alpha$ and $K_\beta$ are compared. 
These numerical examples clearly confirm the correctness of Corollary 1. 

\subsection{A numerical example}
To demonstrate the feasibility of an EH node in mobile settings, let us assume a scenario with the battery depletion probability $\alpha=0.05$, and  the overcharge
probability $\beta=0.3$. Suppose the three rates in the system are 
$\lambda_D=0.72$ data packets/sec, 
$\lambda_E=0.8$ energy packets/sec,
and $\lambda_C=0.9$ connection/sec. 
We have $\gamma=\lambda_D/\lambda_E=0.9$, and $\beta+\gamma>1$. Hence according to Algorithm 1, we obtain the required battery size
$K=K_\alpha=12.82\approx 13$ using Eq.(\ref{eq:K_a}). If we use an energy packet size of 155$\mu W$ which can be generated by a moderate walking activity~\cite{Gorlatova:15}, the needed battery size will be $13\times 155\mu W=2.015mW$.

\section{Conclusion}
Despite the great  potential in utilizing rechargeable nodes in wireless sensor networks and body area networks, the wide application of energy-harvesting IoT systems remains elusive, largely due to the uncertain nature of energy harvesting and the lack of performance analysis results for guiding system design. 
In this article, we have studied the modelling of energy charging and consumption behaviours of sensor nodes in a DTN setting, where
data transmission is subject to both the availability of sufficient energy, and the existence of sensor nodes in reachable proximity. 
A stochastic model with coupled Poisson arrival processes on the energy and data queues of the node is formulated and solved, based on which
a closed-form solution of the optimal battery size is derived to meet the specified probabilities of energy depletion and overflow.  
 
For future work, our model can be extended by considering general probability distributions for energy or data arrivals, allowing more flexible settings of data and energy packet sizes. This may enable more types of energy-harvesting sources being considered, for instance solar. Beyond using discretized energy units, continuous fluid models~\cite{Jones:11} could be employed for future investigation, where the theoretical findings may be further validated by simulation studies under realistic settings. 

\begin{acknowledgement}
I would like to dedicate this article to Emeritus Professor Martin K. Purvis, who introduced me to the wonderful world of queueing theory and encouraged me to brave the less-travelled roads in mobile ad hoc networks and IoT research. This little but meticulous work of mine certainly benefitted from the pleasant chats we had around philosophy, theology, programming languages, Donald Knuth, etc. I would also like to acknowledge that Dr. Sophie Zareei, whose PhD thesis Martin and I co-supervised, did some of the initial work on the same topic.
\end{acknowledgement}

\appendix
\section*{Appendix -- Proofs}
\subsection*{Lemma~\ref{lem:K_a}} \label{proof:Ka}
$K_\alpha$ is monotonically increasing on $\gamma$.
\begin{proof}
It can be worked out that 
\[ \frac{\partial K_\alpha}{\partial\gamma}=\displaystyle
\frac{\frac{1}{\gamma}\ln\frac{1-\gamma}{\alpha}-\frac{1}{1-\gamma}\ln\frac{1-\alpha}{\gamma}}{\ln^2\frac{1-\alpha}{\gamma}}.
\]
We want to show that this derivative is non-negative, i.e.
\[ \frac{1}{\gamma}\ln\frac{1-\gamma}{\alpha}\ge \frac{1}{1-\gamma}\ln\frac{1-\alpha}{\gamma}, \]
which is equivalent to 
\[ \left(\frac{1-\gamma}{\alpha}\right)^{\frac{1}{\gamma}} \ge \left(\frac{1-\alpha}{\gamma}\right)^{\frac{1}{1-\gamma}}. \]
Further transformation leads to 
\[ \left(1-\gamma\right)^{\frac{1}{\gamma}}\gamma^{\frac{1}{1-\gamma}} \ge
\alpha^{\frac{1}{\gamma}}(1-\alpha)^{\frac{1}{1-\gamma}}. \]
Let $x=1-\alpha$, and 
\[ f(z)=(1-z)^\frac{1}{\gamma}z^{\frac{1}{1-\gamma}}, \]
so to prove the inequality above we only need to show that $f(\gamma)\ge f(x)$. To find the maximum of $f(z)$,
let $f'(z)=0$. Solving this, we get $z=\gamma$. 
\qed
\end{proof}
The derivation above also shows that the equality stands when $\gamma=1-\alpha$. 

\subsection*{Theorem~\ref{theo:K_b}}
\label{proof:Kb}
$K_\beta$ is monotonically decreasing on $\beta$.
\begin{proof}
We consider the general case where
$K_\beta$ can be put as 
\[ K_\beta=\displaystyle\frac{\ln(\frac{z-1}{\beta}+1)}{\ln z}-1, \]
where $z=\frac{\gamma}{1-\alpha}$. 
Consider two cases only (the special case of $z=1$ is already handled in main text):
\begin{enumerate}
\item $z>1$. Both the numerator and the denominator of the fraction term are positive. Clearly the bigger $\beta$ is, the smaller $K_\beta$;
\item $z<1$. Both the numerator and the denominator are negative. With $\beta$ increasing,
$\frac{z-1}{\beta}$ will increase, albeit being negative. The numerator will increase as an negative value, hence the value for $K_\beta$ will decrease as a positive value. 
\end{enumerate}
\qed
\end{proof}

\bibliography{paper}

\end{document}